%% file: main.tex
\documentclass[conference]{IEEEtran}
\IEEEoverridecommandlockouts
\usepackage{cite}
\usepackage{amsmath,amssymb,amsfonts}
\usepackage{graphicx}
\usepackage{xcolor}
\usepackage{booktabs}
\usepackage{balance}
\usepackage{microtype}
\newcommand{\E}{\mathbb{E}}

\newcommand{\LOO}{\mathrm{LOO}}
\newcommand{\Sh}{\mathrm{Sh}}

\newtheorem{proposition}{Proposition}


\begin{document}

\title{Receiver-Specific Interventional Valuation of Semantic Atoms for\\
Budget-Constrained Multi-Agent Communication}

\author{
\IEEEauthorblockN{Danial Dehghani, Maximilian L\"ubke, Friedemann Laue, Zahra Seifaei and Norman Franchi}
\IEEEauthorblockA{Institute for Smart Electronics and Systems (LITES)\\
Friedrich-Alexander-Universit\"at Erlangen-N\"urnberg\\
Email: \{danial.dehghani, maximilian.luebke, friedemann.laue, zahra.seifaei, norman.franchi\}@fau.de}
}

\maketitle

\begin{abstract}

Cooperative multi-agent semantic communication requires deciding not only
how a transmitter encodes its observation, but also which resulting semantic
content should be delivered to each receiver under a limited link budget.
We introduce an encoder-agnostic transmitter-side framework for
receiver-specific valuation of independently selectable semantic units,
termed semantic atoms. Each atom is assigned a signed value that quantifies
whether delivering it is expected to improve or degrade the intended
receiver's downstream decision. We formulate this effect through a
controlled-delivery utility and identify it via prospective randomized
payload delivery before the receiver acts. A shadow receiver estimates this
utility using only transmitter-available information, from which
receiver-specific leave-one-out (LOO) and Shapley atom values are computed.
Experiments on a four-receiver intersection benchmark show 68.4\%
accuracy in recovering the sign of non-negligible LOO atom effects and
a Spearman correlation of 0.302 between estimated and reference LOO
values. Under limited budgets, we show that receiver-specific selection
consistently outperforms shared and random alternatives, demonstrating the
benefit of adapting semantic ranking to the intended receiver. At full
affordability, exhaustive subset selection also outperforms full-payload
transmission, showing that transmitting all available content need not
maximize receiver utility. Overall, our results show that receiver-specific
semantic valuation can support effective budget-constrained communication
despite imperfect atom-value estimation.

\end{abstract}

\begin{IEEEkeywords}
semantic communication, generative semantic communication, value of information, multi-agent systems, task-oriented communication, budgeted selection
\end{IEEEkeywords}

\input{sections/01_introduction}
\input{sections/02_system_model}
\input{sections/03_method}
\input{sections/04_experiments}
\input{sections/05_results}
\input{sections/06_conclusion}
\input{sections/07_acknowledgment}

\balance
\bibliographystyle{IEEEtran}
\bibliography{references}
\end{document}

%% file: sections/01_introduction.tex
\section{Introduction}\label{sec:introduction}

Future wireless systems will increasingly connect vehicles, robots, sensors,
and autonomous agents that exchange information to support decisions rather
than merely reproduce source samples. Prior work on semantic and task-oriented
communication has established this shift toward transmitting information
according to its relevance to the downstream task
\cite{xie2021deepsc,kountouris2021}. More recently, generative semantic communication has extended this direction
by using foundation models, diffusion models, and multimodal priors for
more efficient semantic representation and reconstruction
\cite{liang2025genai,zhang2025successive}. While these advances improve semantic representation and recovery, an additional challenge arises in cooperative multi-agent systems: when an agent shares its observation with multiple receiving agents, it must determine which extracted semantic information to communicate to each receiver under a constrained link budget.

Consider a cooperative intersection in which a roadside agent and several connected vehicles form a multi-agent team. At a given epoch, the roadside agent may observe a pedestrian trajectory, an occluded cyclist, and a congestion cue, while the other agents differ in their routes, local observations, and decision states. Consequently, different receiving agents may require different parts of the same shared observation. Communicating the full representation to every agent can consume the limited per-link communication budget on redundant information, whereas communicating one common subset cannot account for these heterogeneous information needs. Thus, even when agents share the same source of semantic information, the communication relevance of its contents can differ across the team.

Existing semantic communication research has advanced learned semantic
representation \cite{xie2021deepsc}, task-oriented transmission based on
feature importance \cite{wang2024importance}, and adaptive communication
resource allocation \cite{zhang2023drl}. Generative semantic communication has further improved semantic reconstruction
from compact representations through learned priors
\cite{liang2025genai,zhang2025successive}, while task-oriented multi-agent
communication has used value of information to adapt the precision of shared
observations \cite{mostaani2025scale}. These approaches guide task-relevant encoding or resource assignment, but they do
not directly address assigning independently selectable components of one
transmitting agent's observation differently to multiple receiving agents
according to their individual information needs.

A closely related line of work studies learned communication and credit assignment in cooperative multi-agent systems. Targeted messaging and social-influence methods learn whom to communicate with or measure how communication affects other agents' behavior \cite{das2019tarmac,jaques2019social}, while counterfactual and Shapley-based approaches assign value to communicated messages \cite{gao2025cmv,xue2022shapleymv}. These works provide important foundations for communication valuation, but primarily operate at the message or link level rather than valuing independently selectable semantic content for different receivers under explicit communication budgets. We address this gap with a receiver-specific semantic valuation framework for
budget-constrained multi-agent communication. Rather than using one importance
measure across receivers, we evaluate communicated content with respect to the
intended receiver and its downstream decision. The framework links semantic
valuation directly to receiver-specific payload construction while remaining
independent of the underlying encoder, making it applicable to the outputs of generative,
multimodal, or conventional semantic representation models. It therefore enables receiver-specific selection of independently selectable
semantic content under explicit per-link budgets, so that different receivers
can receive different information from the same observation.

The remainder of this paper is organized as follows.
Section~\ref{sec:model} presents the system model,
Section~\ref{sec:method} develops the valuation and selection method,
Section~\ref{sec:experiments} describes the experiments,
Section~\ref{sec:results} reports the results, and
Section~\ref{sec:conclusion} concludes the paper.

%% file: sections/02_system_model.tex
\section{System Model}\label{sec:model}

We consider cooperative multi-agent decision making with a single pre-action communication round. This section first specifies the team dynamics and communication interface, then defines semantic atoms and receiver-specific payloads together with their receiver-specific task utility, and finally formulates communication cost, link-budget constraints, and the resulting budgeted payload-selection problem.

\subsection{Cooperative Multi-Agent Communication Setting}

We model a cooperative team as a decentralized partially observable Markov decision process (Dec-POMDP) \cite{oliehoek2016},
$\mathcal{M}=(\mathcal{N},\mathcal{S},\{\mathcal{O}_i\},\{\mathcal{A}_i\},T,\Omega,r,\gamma,\mu_0)$.
Here, $\mathcal{N}$ is the set of agents, $\mathcal{S}$ denotes the global state space, and $\mathcal{O}_i$ and $\mathcal{A}_i$ are the observation and action spaces of agent $i$; $T$, $\Omega$, $\gamma$, and $\mu_0$ denote the transition kernel, observation kernel, discount factor, and initial-state law, respectively.

At epoch $t$, the system is in state $s_t$, and a transmitting agent $i$ observes
$o_{i,t}\in\mathcal{O}_i$. The scheduler specifies the receiver set
$\mathcal{R}_{i,t}\subseteq\mathcal{N}\setminus\{i\}$ and the corresponding
per-link communication budgets. Before the current communication round, each
$j\in\mathcal{R}_{i,t}$ has a local history $\tau^{-}_{j,t}$ comprising the
information available to that receiver up to that point. A single simultaneous,
error-free communication round then allows agent $i$ to transmit to the scheduled
receivers. Scheduling, relaying, and shared-budget optimization are outside our
scope.

The information delivered in the current round is incorporated into the
receiver's pre-communication history $\tau^-_{j,t}$, yielding the
post-communication history $\tau^+_{j,t}$. Receiver $j$ then draws its action
$a_{j,t}$ from the stochastic policy $\pi_j$ conditioned on this updated
history, i.e.,
$a_{j,t}\sim\pi_j(\cdot\mid\tau^+_{j,t})$.
After communication, all agents take their task actions, forming the joint action
$\mathbf{a}_t=(a_{1,t},\ldots,a_{N,t})\in\mathcal{A}$, where
$\mathcal{A}\triangleq\prod_{i\in\mathcal{N}}\mathcal{A}_i$.
The agents constitute a cooperative team in that they pursue the same task objective and receive the shared reward
$r:\mathcal{S}\times\mathcal{A}\rightarrow\mathbb{R}$,
so that $r(s_t,\mathbf{a}_t)$ evaluates the team's outcome at epoch $t$.
The policy profile $\boldsymbol{\pi}=(\pi_1,\ldots,\pi_N)$ is pretrained and remains fixed during logging, valuation, and deployment; hence, only the communicated payload is optimized in this study.
\subsection{Semantic Atoms, Payloads, and Receiver-Specific Utility}
\label{subsec:atoms}

To enable communication decisions at a finer granularity than an entire observation, we represent the information available at transmitting agent $i$ as a collection of independently deliverable semantic units. We call each such unit a \emph{semantic atom}: an operationally minimal encoded component that is not further decomposed by the subsequent selection layer and can be delivered to, or withheld from, each receiving agent independently. The usefulness of a selected subset of atoms for a particular receiver is defined later in this subsection at the payload level.

We model semantic-atom extraction through a fixed upstream mapping
$\Phi_{\phi_i}$, whose design and training are outside the scope of this work.
Applied to observation $o_{i,t}$, the mapping generates the indexed candidate set
\begin{equation}
Z_{i,t}
\triangleq
\big\{(m,q_{i,m,t})\big\}_{m=1}^{M_{i,t}}
=
\Phi_{\phi_i}(o_{i,t}),
\label{eq:candidate-set}
\end{equation}
where $M_{i,t}=|Z_{i,t}|$ is the number of available atoms, $q_{i,m,t}$ is the encoded and quantized semantic descriptor of atom $m$, and index $m$ preserves atom identity. Thus, $Z_{i,t}$ denotes the \emph{full candidate set} of semantic atoms extracted from agent $i$'s current observation. It is formed before receiver-specific valuation and is therefore common to all $j\in\mathcal{R}_{i,t}$.

For each receiving agent $j$, the transmitted payload is a receiver-specific subset
$S_{i\to j,t}\subseteq Z_{i,t}$ of the common candidate set. Its composition is specified by the binary selection vector
$\mathbf{x}_{i\to j,t}\in\{0,1\}^{M_{i,t}}$, where
$x_{i\to j,m,t}=1$ indicates that atom $m$ is included. Accordingly,
\begin{equation}
S_{i\to j,t}
\triangleq
\big\{(m,q_{i,m,t})\in Z_{i,t}:
x_{i\to j,m,t}=1\big\},
\label{eq:payload}
\end{equation}
so that the same candidate set can induce different payloads for different receivers.

Having defined the receiver-specific payload $S_{i\to j,t}$, we next
characterize its usefulness to receiver $j$. This usefulness depends on the
receiver's pre-communication information $\tau^-_{j,t}$, summarized by a
private decision state $\beta_{j,t}$ that is not observable by transmitter
$i$. The transmitter instead observes a public context $c_{j,t}$, such as the
receiver's identity, role, route, task query, or public link metadata.

To quantify the contribution of the receiver's payload-informed action to the
cooperative task, we use the bounded difference reward
\cite{wolpert2001}
\begin{equation}
Y_{j,t}
\triangleq
r(s_t,\mathbf{a}_t)
-
\mathbb{E}_{\tilde a_j\sim\rho_{j,t}}
\!\left[
r\!\left(s_t,(\mathbf{a}_{-j,t},\tilde a_j)\right)
\right],
\label{eq:receiver-outcome}
\end{equation}
where $\mathbf{a}_{-j,t}$ contains the realized actions of all agents except
$j$, $\tilde a_j$ is a counterfactual action of receiver $j$, and
$\rho_{j,t}=\pi_j(\cdot\mid\tau^{-}_{j,t})$ is the same fixed receiver policy
evaluated without the current-round payload. Thus, $Y_{j,t}$ measures the
incremental task contribution of receiver $j$'s post-communication action
relative to its no-current-payload behavior, with the other agents' actions
held fixed.

For a fixed link $i\to j$, epoch $t$, candidate set
$Z\equiv Z_{i,t}$, and public context $c\equiv c_{j,t}$, we define the
usefulness of any payload $S\subseteq Z$ by the controlled-delivery utility
\begin{equation}
U_{j,t}(S\mid c,Z)
\triangleq
\mathbb{E}_{\beta_{j,t},\,\omega_t}
\!\left[
Y_{j,t}
\mid
\operatorname{do}(S_{i\to j,t}=S),\,c,\,Z
\right].
\label{eq:payload-utility}
\end{equation}
Here, the expectation averages $Y_{j,t}$ over
$\beta_{j,t}$ and the remaining randomness $\omega_t$. Since
$\beta_{j,t}$ is not observed by transmitter $i$, the utility is conditioned
only on the transmitter-available context $c$ and candidate set $Z$.
The variable $\omega_t$ collects the remaining policy, environment, nonfocal
communication, and action randomness. The
$\operatorname{do}$ operator denotes an external intervention that sets a
variable, rather than observational conditioning on its realized value
\cite{pearl2009overview}. Here,
$\operatorname{do}(S_{i\to j,t}=S)$ intervenes only on the focal link,
forcing $S$ before receiver $j$ acts, while all nonfocal links and system
dynamics retain their joint conditional distribution, including dependencies
among other agents' payloads. Averaging over $\beta_{j,t}$ and $\omega_t$
therefore isolates the expected effect of focal delivery without fixing
unrelated system realizations. Hence, $U_{j,t}(S\mid c,Z)$ is the expected receiver-attributed task
contribution under forced delivery of $S$ for the specified public context
and candidate set.
\subsection{Communication Cost and Budgeted Payload Selection}
\label{subsec:selection-problem}

Having defined receiver-specific payload utility, we now specify the
communication resources that constrain payload construction. Each atom is
independently packetized and assigned a known positive cost
$b_{i,m,t}>0$ bits, including encoded content and required metadata.
The cost is fixed before receiver-specific selection and known to
transmitting agent $i$.

For any $S\subseteq Z_{i,t}$, independent packetization and the exclusion
of cross-atom coding yield
\begin{equation}
C(S)\triangleq
\sum_{(m,q_{i,m,t})\in S} b_{i,m,t},
\qquad C(\emptyset)=0.
\label{eq:payload-cost}
\end{equation}
The scheduler assigns each active link $i\to j$ a budget
$B_{i\to j,t}\geq0$ bits for the current round; feasibility requires
$C(S_{i\to j,t})\leq B_{i\to j,t}$.

Utility and budget together define the transmitter-side selection problem that
our estimation and selection method aims to approximate. For each active link $i\to j$,
\begin{subequations}\label{prob:selection}
\begin{align}
\max_{\mathbf{x}_{i\to j,t}\in\{0,1\}^{M_{i,t}}}\quad&
U_{j,t}\!\left(S_{i\to j,t}\mid c_{j,t},Z_{i,t}\right)
\\
\text{s.t.}\quad&
\sum_{m=1}^{M_{i,t}}
b_{i,m,t}x_{i\to j,m,t}
\leq B_{i\to j,t}.
\end{align}
\end{subequations}
The zero vector selects $\emptyset$ and guarantees feasibility.
Problem~\eqref{prob:selection} is myopic and solved independently per
active link; scheduling, joint-link optimization, and shared budgets are
outside our scope.

Because atoms may be redundant, complementary, or harmful under the fixed
receiver policy, $U_{j,t}$ need not be additive or monotone. Thus, solving \eqref{prob:selection} generally requires considering the
utility of different atom subsets rather than simply ranking atoms by
individual importance. If $B_{i\to j,t}<C(Z_{i,t})$, subset selection is mandatory.
Even when $Z_{i,t}$ is affordable, transmitting all atoms need not be optimal.
Section~III estimates this utility and constructs score-based payloads.
Under modular utility, selection reduces to an additive knapsack; outside
modularity, score-based selection is evaluated empirically.
Proposition~1 later formalizes the full-payload cases after signed atom values
are defined.

%% file: sections/03_method.tex
\section{Semantic Atom Valuation and Selection}
\label{sec:method}

This section develops transmitter-side receiver-specific payload selection by
defining signed atom values, identifying and estimating them from randomized
delivery, and using the resulting scores for budget-feasible selection.
\subsection{Signed Interventional Atom Values}
\label{sec:estimand}

Section~\ref{sec:model} defines $U_{j,t}$ as the receiver-specific utility of
a complete payload. For payload construction, however, we must also determine
how much each individual atom contributes to that utility. We therefore derive
signed atom-level values from $U_{j,t}$ and use them as the quantities to be
estimated and selected under the communication budget.

For link $i\to j$ at epoch $t$, let $Z\equiv Z_{i,t}$, $c\equiv c_{j,t}$, and $M\equiv|Z|$, with $q_m\equiv(m,q_{i,m,t})\in Z$ denoting atom $m$. For any $S\subseteq Z\setminus\{q_m\}$, the signed marginal effect of atom $m$ is
\begin{equation}
\Delta_{i\to j,m,t}(S)
\triangleq
U_{j,t}(S\cup\{q_m\}\mid c,Z)
-
U_{j,t}(S\mid c,Z).
\label{eq:marginal}
\end{equation}
Thus, $\Delta_{i\to j,m,t}(S)$ quantifies the change in the controlled-delivery utility of receiver $j$ caused by adding $q_m$ to the reference payload $S$. Because this contribution can vary with the other atoms already contained in $S$, an atom does not generally have a unique context-independent marginal effect. We therefore obtain a single atom-level value by specifying how the reference payload is chosen.

Let $Z^{(-m)}\triangleq Z\setminus\{q_m\}$. We adopt the
leave-one-out (LOO) and Shapley attributions \cite{shapley1953}:
\begin{align}
\varphi^{\LOO}_{i\to j,m,t}
&\triangleq
\Delta_{i\to j,m,t}(Z^{(-m)}),
\label{eq:loo}\\
\varphi^{\Sh}_{i\to j,m,t}
&\triangleq
\sum_{S\subseteq Z^{(-m)}}
\frac{|S|!(M-1-|S|)!}{M!}
\Delta_{i\to j,m,t}(S).
\label{eq:shapley}
\end{align}
LOO measures the atom's marginal effect relative to the full payload,
whereas Shapley averages its marginal effect over all coalition
backgrounds and satisfies
$\sum_{m=1}^{M}\varphi^{\Sh}_{i\to j,m,t}
=
U_{j,t}(Z\mid c,Z)-U_{j,t}(\emptyset\mid c,Z)$.
This identity only decomposes the total utility gain; it does not make
Shapley-based additive selection optimal when atom interactions make
$U_{j,t}$ nonadditive.

The atom values are deliberately signed. Classical value of information
is nonnegative when additional information can be ignored by a
re-optimized decision rule \cite{howard1966}; here, however, the deployed
receiver policy $\pi_j$ remains fixed and may react adversely to stale,
conflicting, or otherwise misleading information. Consequently,
$\varphi^{\LOO}_{i\to j,m,t}<0$ means that withholding $q_m$ from the
full payload increases receiver $j$'s expected utility.

\begin{proposition}
\label{prop:fullpayload}
Fix $(i,j,t,c,Z)$. Suppose first that
$U_{j,t}(\cdot\mid c,Z)$ is monotone nondecreasing with respect to set
inclusion. If $B_{i\to j,t}\geq C(Z)$, then the full payload $Z$ solves
\eqref{prob:selection}. If $B_{i\to j,t}<C(Z)$, then $Z$ is infeasible.
Finally, if $B_{i\to j,t}\geq C(Z)$ and
$\varphi^{\LOO}_{i\to j,m,t}<0$ for some $m$, then $Z$ is feasible but
strictly suboptimal.
\end{proposition}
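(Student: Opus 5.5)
The argument proceeds by three short, independent checks, one for each claim in the statement. Throughout, we fix $(i,j,t,c,Z)$ and write $U(\cdot)\equiv U_{j,t}(\cdot\mid c,Z)$ and $B\equiv B_{i\to j,t}$. The only facts needed are the definitions of the feasible set in \eqref{prob:selection}, the additive cost \eqref{eq:payload-cost} with $b_{i,m,t}>0$, and the LOO value \eqref{eq:loo}.

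\emph{Optimality of $Z$ under monotonicity.} Assume $U$ is nondecreasing and $B\geq C(Z)$. Choosing $x_{i\to j,m,t}=1$ for all $m$ gives cost $C(Z)\leq B$, so $Z$ is feasible. Every feasible selection vector induces, by \eqref{eq:payload}, a payload $S\subseteq Z$. Monotonicity then gives $U(S)\leq U(Z)$. Hence $Z$ attains the maximum of \eqref{prob:selection}, and the maximum exists because the feasible set is finite and contains $Z$.

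\emph{Infeasibility when $B<C(Z)$.} This follows from the definitions alone, with no assumption on $U$. The full payload corresponds to the all-ones vector, whose constraint value is $\sum_m b_{i,m,t}=C(Z)>B$, which violates the budget constraint.

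\emph{Strict suboptimality when some LOO value is negative.} Here the monotonicity hypothesis is dropped, since a negative LOO value contradicts it. Assume $B\geq C(Z)$ and $\varphi^{\LOO}_{i\to j,m,t}<0$ for some $m$. Feasibility of $Z$ holds as in the first claim. For the competitor, take $Z^{(-m)}$. Its cost is $C(Z)-b_{i,m,t}<C(Z)\leq B$, using $b_{i,m,t}>0$ (nonnegativity would already suffice), so it is feasible. By \eqref{eq:loo} and \eqref{eq:marginal},
\begin{equation*}
U(Z)-U(Z^{(-m)})=\varphi^{\LOO}_{i\to j,m,t}<0,
\end{equation*}
so $U(Z^{(-m)})>U(Z)$. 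Therefore $Z$ is not a maximizer, and any optimum strictly exceeds $U(Z)$.

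No step is a real obstacle. The one point to state carefully is the scope of the hypotheses: the monotonicity assumption applies only to the first claim, because a negative LOO value contradicts monotonicity. The proof should therefore say explicitly that the final claim holds for general, non-monotone $U$. It should also note that $Z^{(-m)}\cup\{q_m\}=Z$, so \eqref{eq:marginal} applies directly with $S=Z^{(-m)}$.
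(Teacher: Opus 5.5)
Your proposal is correct and follows essentially the same argument as the paper's proof. Both use additive positive costs to get feasibility of $Z$ (and of $Z^{(-m)}$), monotonicity to show $Z$ dominates every subset, and the LOO definition to exhibit $Z^{(-m)}$ as a feasible, strictly better competitor when $\varphi^{\LOO}_{i\to j,m,t}<0$.
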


\begin{IEEEproof}
Since the atom costs are positive and additive, $S\subseteq Z$ implies
$C(S)\leq C(Z)$. Thus, if $B_{i\to j,t}\geq C(Z)$, every subset of $Z$
is feasible. Under monotonicity,
$U_{j,t}(S\mid c,Z)\leq U_{j,t}(Z\mid c,Z)$ for all $S\subseteq Z$,
so $Z$ is optimal.

If $B_{i\to j,t}<C(Z)$, then $Z$ is infeasible. Finally, if
$\varphi^{\LOO}_{i\to j,m,t}<0$, then
$C(Z^{(-m)})<C(Z)\leq B_{i\to j,t}$ and, by~\eqref{eq:loo},
\[
U_{j,t}(Z^{(-m)}\mid c,Z)-U_{j,t}(Z\mid c,Z)
=-\varphi^{\LOO}_{i\to j,m,t}>0.
\]
so $Z^{(-m)}$ is feasible and strictly dominates $Z$.
\end{IEEEproof}
\subsection{Prospective Utility Identification and Surrogate Estimation}
\label{sec:identification}

The utility in~\eqref{eq:payload-utility} is defined interventionally and is
therefore not directly available from ordinary communication logs. To estimate
it, we collect randomized pre-action communication data under different
delivered payloads. These logged data identify the controlled-delivery utility,
which is then approximated by a transmitter-side surrogate and used to estimate
receiver-specific atom values. 

\paragraph{Prospective logging} For focal link $i\to j$, let $X_{i\to j,t}\in\{0,1\}^{M_{i,t}}$ denote a randomized binary delivery mask, where each entry indicates whether the corresponding atom is delivered, and let $S(X_{i\to j,t})$ denote the resulting payload. In prospective (P) logging, the mask is randomized and applied before receiver $j$ acts, so the logged outcome reflects the receiver's response to the actually delivered payload. For comparison, retrospective (R) logging assigns the mask only after the receiver has acted on the full payload, while retrospective occlusion (RO) removes atoms only from the surrogate input. Thus, only P changes the information available to the receiver before its decision.

\paragraph{Utility identification}
To recover the controlled-delivery utility in~\eqref{eq:payload-utility}
from prospective logging data, let $Y_{j,t}(S)$ denote the potential outcome
under forced delivery of payload $S$, and let $\Xi(Z)$ contain the payloads
queried for LOO and Shapley valuation. Identification requires three conditions:
\emph{consistency}, meaning the observed outcome equals $Y_{j,t}(S)$ for the
payload actually delivered; \emph{conditional randomization}, meaning the
delivery mask is independent of the potential outcomes given $(c,Z)$; and
\emph{positivity}, meaning every queried payload has nonzero probability of
being assigned. Under these conditions,
\begin{equation}
\E_{\rm log}
\!\left[
Y_{j,t}\mid S(X_{i\to j,t})=S,c,Z
\right]
=
U_{j,t}(S\mid c,Z).
\label{eq:identification}
\end{equation}
Thus, the conditional mean outcome observed under prospective logging equals
the controlled-delivery utility in~\eqref{eq:payload-utility}.

\paragraph{Surrogate estimation} Because the same $(c,Z,S)$ configuration may not be observed repeatedly, we learn the identified utility using a scalar \emph{shadow-receiver} model $\widehat V_\theta(\widetilde c,Z,x)\in\mathbb{R}$. The surrogate input $\widetilde c$ augments the public context $c$ with
transmitter-available link, stage, and action-support metadata, while $Z$ and $x$ specify the candidate
set and delivered subset; $\beta_{j,t}$ is excluded. Randomized samples collected across links and epochs form \[ \mathcal D = \{(\widetilde c_n,Z_n,x_n,y_n)\}_{n=1}^{N_{\mathcal D}}, \qquad N_{\mathcal D}=|\mathcal D|. \] The surrogate is trained as \begin{equation} \widehat{\theta}\in\operatorname*{arg\,min}_{\theta} \frac{1}{N_{\mathcal D}} \sum_{n=1}^{N_{\mathcal D}} \left[ y_n-\widehat V_\theta(\widetilde c_n,Z_n,x_n) \right]^2 . \label{eq:surrogate-loss} \end{equation} By~\eqref{eq:identification}, this regression estimates $U_{j,t}(S(x)\mid c,Z)$ for payloads supported by the randomized design. For $M=|Z|$, let $\mathbf 1_M$ denote the full delivery mask and $\mathbf 1_M^{(-m)}$ the same mask with atom $m$ withheld. The resulting plug-in LOO score is \begin{equation} v^V_{i\to j,m,t} \triangleq \widehat V_{\widehat\theta}(\widetilde c,Z,\mathbf 1_M) - \widehat V_{\widehat\theta}(\widetilde c,Z,\mathbf 1_M^{(-m)}). \label{eq:plugin-loo} \end{equation} This score estimates $\varphi^{\LOO}_{i\to j,m,t}$; Shapley scores are obtained analogously from surrogate predictions over the required coalition masks. At deployment, transmitter $i$ computes these scores without access to $\beta_{j,t}$, the receiver's realized action or outcome, or an environment rollout.
\subsection{Budgeted Payload Construction}

The estimated atom scores from Sec.~\ref{sec:identification} are used to
construct a budget-feasible payload for each active link $i\to j$.
Atoms with nonpositive scores are excluded, and the remaining atoms are
selected by the following $0$--$1$ knapsack problem:
\begin{equation}
\max_{x\in\{0,1\}^{M}}
\sum_{m=1}^{M} v^V_{i\to j,m,t}x_m
\quad\mathrm{s.t.}\quad
\sum_{m=1}^{M} b_{i,m,t}x_m\leq B_{i\to j,t}.
\label{eq:score-selection}
\end{equation}
With uniform costs, \eqref{eq:score-selection} selects the largest positive
scores up to the affordable cardinality; heterogeneous integer costs admit
exact $0$--$1$ knapsack dynamic programming. LOO requires $M+1$ surrogate
queries per receiver, or $R(M+1)$ for $R$ receivers, while dynamic programming
costs $O(MB)$ in integer cost units. With exact atom values, additive selection is exact under modular utility; with
learned scores or nonmodularity, it approximates~\eqref{prob:selection}. Accordingly, experiments evaluate each selected payload using an
evaluation-only reference estimate of $U_{j,t}$ in~\eqref{eq:payload-utility},
which is unavailable at deployment, rather than the surrogate score objective.

%% file: sections/04_experiments.tex
\section{Experimental Setup}
\label{sec:experiments}

We instantiate Secs.~\ref{sec:model}--\ref{sec:method} in the
Multi-Receiver Intersection Response (MRIR) benchmark, described in detail
below, and evaluate valuation fidelity and budgeted payload selection.
\subsection{MRIR System Instantiation}
\label{subsec:mrir-setup}

MRIR instantiates one Dec-POMDP decision epoch with one roadside transmitter
$i$, four scheduled receivers $j\in\mathcal{R}_{i,t}$, and three
hazards. Each instance contains global state $s_t$, public receiver
contexts $c_{j,t}$, and receiver-private visibility/noise states
$\beta_{j,t}$, which affect receiver decisions but are unavailable to the
transmitter. One error-free pre-action communication round is used.
Each receiver chooses among go, slow, brake, evade-left, and evade-right.
The shared reward is separable,
$r(s_t,\mathbf a_t)=\sum_j r_j(s_t,a_{j,t})$.
Each receiver-local reward $r_j$ takes the value $1$, $0.4$,
$0$, or $-1$ for safe/efficient, over-conservative, inefficient,
or unsafe behavior, respectively, giving $|Y_{j,t}|\leq2$
in~\eqref{eq:receiver-outcome}. MRIR therefore isolates
semantic-layer decisions rather than detailed traffic or channel dynamics.

Semantic extraction is fixed and not trained or evaluated. The mapping in
\eqref{eq:candidate-set} yields a common $Z_{i,t}$ with
$M_{i,t}\in\{8,9,10\}$ atoms from geometry/motion, class/intent,
signal/congestion, duplicate, and stale/conflicting categories, with
receiver-independent confidence and age metadata. Categories describe content,
not utility; negative $\varphi^{\LOO}_{i\to j,m,t}$ arises only when an atom
worsens the frozen receiver policy's decision. All atoms cost $48$ bits.
The receiver policy uses a permutation-invariant set encoder over delivered
atoms, an encoder of the receiver's local information, and a categorical
head over the five actions. It observes $\beta_{j,t}$ and $c_{j,t}$, but
never withheld atoms. It is trained by supervised learning of the
safe/efficient target action computed from the latent scene, on $40{,}000$
instances with randomized partial payloads. Model selection uses only the
policy-validation split; the selected policy is then frozen across all
logging and valuation regimes. Policy-training and validation instances
are disjoint from logging and test instances.
On $8{,}000$ validation instances, the same policy achieves $39.2\%$ accuracy
without the current payload, $58.4\%$ with randomized payloads, and $67.2\%$
with full payload. Table~\ref{tab:setup} summarizes the configuration.
\begin{table}[t]
\caption{MRIR experimental configuration.}
\label{tab:setup}
\centering
\scriptsize
\setlength{\tabcolsep}{2.6pt}
\renewcommand{\arraystretch}{0.90}
\begin{tabular}{@{}ll@{}}
\toprule
\multicolumn{2}{@{}l}{\textbf{Environment and semantic atoms}}\\
Tx / receivers / hazards / actions & $1$ / $4$ / $3$ / $5$\\
Atoms $M_{i,t}$ / categories & $8$--$10$ / $5$\\
Packet cost & $48$ bits/atom\\
Local reward levels / outcome bound &
$\{1,0.4,0,-1\}$ / $|Y_{j,t}|\leq2$\\

\midrule
\multicolumn{2}{@{}l}{\textbf{Fixed receiver policy and logging}}\\
Policy train / val.\ instances & $40{,}000$ / $8{,}000$\\
Policy accuracy: local / masked / full & $39.2$ / $58.4$ / $67.2\%$\\
Logging train / val.\ instances & $25{,}000$ / $6{,}000$\\
Prospective masks: LOO / prefix / Bernoulli &
$35/35/30\%$, $p=0.5$\\

\midrule
\multicolumn{2}{@{}l}{\textbf{Shadow receiver}}\\
Architecture / embedding width & DeepSets / $64$\\
Parameters / independent seeds & $44{,}351$ / $5$\\

\midrule
\multicolumn{2}{@{}l}{\textbf{Evaluation}}\\
Valuation / exact-selection cells & $8{,}000$ / $4{,}000$\\
Stored utilities / oracle draws & $1{,}882{,}112$ / $1{,}408$\\
Budget grid $\alpha$ & $\{0,.125,.25,.375,.5,.75,1\}$\\
Bootstrap / sign tolerance & $10{,}000$ / $\delta_0=0.02$\\
\bottomrule
\end{tabular}
\vspace{-0.2cm}
\end{table}

\subsection{Prospective Logging and Surrogate Training}
\label{subsec:logging-training}

Following Sec.~\ref{sec:identification}, the prospective regime draws
$X_{i\to j,t}$ before receiver $j$ acts using a fixed mixture of $35\%$
LOO-targeted masks for full/one-withheld queries, $35\%$ permutation-prefix
masks for Shapley coalitions, and $30\%$ independent Bernoulli masks with
$p=0.5$ for broad subset support. The mixture is fixed before held-out
evaluation. Logging uses $25{,}000$ training and $6{,}000$ validation
instances, yielding $100{,}000$ and $24{,}000$ focal-link records per regime
and seed across four receivers. Each record is $(\widetilde c,Z,x,Y)$.

\noindent\textbf{Training-label access and computation.}
The offline logger has simulator access to $s_t$, the receiver's local
information, the frozen policy, and the known reward function. After
pre-action delivery of $S(x)$, receiver $j$ draws its action from the
payload-conditioned policy. The logger obtains
$\rho_{j,t}=\pi_j(\cdot\mid\tau^-_{j,t})$ by evaluating the same checkpoint
with an empty current payload and the same local information; no separate
baseline policy is trained. It computes the baseline in
\eqref{eq:receiver-outcome} by exact summation over the five counterfactual
actions, holding $s_t$ and $\mathbf a_{-j,t}$ fixed. Because MRIR's reward
is separable, the nonfocal terms cancel, yielding
\begin{equation}
Y_{j,t}=r_j(s_t,a_{j,t})
-\sum_{a\in\mathcal A_j}\rho_{j,t}(a)r_j(s_t,a).
\label{eq:mrir-label}
\end{equation}
Thus, labels require privileged offline access and are not assumed locally
observable at the roadside transmitter. The training record exposes only
the scalar label to the regressor; neither $s_t$, $\beta_{j,t}$, nor
$\rho_{j,t}$ is a surrogate input. Practical training requires a simulator
or an equivalent reward-evaluation and receiver-feedback mechanism.

The shadow receiver $\widehat V_\theta$ is a permutation-invariant DeepSets
regressor \cite{zaheer2017deep} with width $64$ and $44{,}351$ parameters.
All regimes use the same capacity and training protocol, five independent
seeds, and validation mean-squared error (MSE) for model selection. Deployment uses only transmitter-available $(\widetilde c,Z,x)$;
$\beta_{j,t}$, realized actions/outcomes, and the true utility
$U_{j,t}$ are never inputs.

\subsection{Evaluation Protocol and Result Definitions}
\label{subsec:evaluation-protocol}

We evaluate whether shadow-receiver predictions recover receiver-specific atom
values and whether these values yield effective budgeted payloads. Imperfect
estimates may still support effective selection if they preserve
decision-relevant signs and rankings.

\noindent\textbf{Valuation fidelity.}
We use $2{,}000$ held-out instances across four receivers, giving $8{,}000$
receiver cells per seed, and report five seeds. For each queried payload $S$,
$U^{\mathrm{ref}}_{j,t}(S\mid c,Z)$ approximates~\eqref{eq:payload-utility} by
forcing $S$ and averaging the resulting $Y_{j,t}$ over $\beta_{j,t}$ and
$\omega_t$, using $1{,}408$ conditional draws of $(s_t,\beta_{j,t})$, exact
expectation over five receiver actions, and common draws across payloads. The
reference and estimated LOO effects are
$\varphi^{\LOO,\mathrm{ref}}_{m}=U^{\mathrm{ref}}_{j,t}(Z\mid c,Z)-U^{\mathrm{ref}}_{j,t}(Z^{(-m)}\mid c,Z)$ and
$\widehat\varphi^{\LOO}_{m}=\widehat V_{\hat\theta}(\widetilde c,Z,\mathbf1)-\widehat V_{\hat\theta}(\widetilde c,Z,\mathbf1^{(-m)})$.
We report mean absolute error (MAE),
$N_\varphi^{-1}\sum_{r,m}|\widehat\varphi^{\LOO}_{r,m}-\varphi^{\LOO,\mathrm{ref}}_{r,m}|$, where $N_\varphi$ is the total number of evaluated atom effects,
sign accuracy over $|\varphi^{\LOO,\mathrm{ref}}_{r,m}|>\delta_0$ with
$\delta_0=0.02$, and within-cell Spearman $\rho$. P, R, and RO use matched
capacity, data scale, and seeds.

\noindent\textbf{Budgeted payload evaluation.}
A disjoint set of $1{,}000$ instances across four receivers gives $4{,}000$
cells per seed. For each cell, all $2^{M_{i,t}}$ subsets are evaluated using
the same evaluation-reference utility procedure described above, yielding
$1{,}882{,}112$ stored utilities. We use
$\alpha=B_{i\to j,t}/C(Z_{i,t})
\in\{0,.125,.25,.375,.5,.75,1\}$, with intermediate budgets mapped to
integer cost units. For receiver-specific LOO, the shadow receiver estimates
$\widehat{\varphi}^{\LOO}_{m}
=\widehat V_{\hat\theta}(\widetilde c,Z,\mathbf 1)
-\widehat V_{\hat\theta}(\widetilde c,Z,\mathbf 1^{(-m)})$.
These scores are used in~\eqref{eq:score-selection} to construct the
budget-feasible payload $\widehat S^{\LOO}(B)$. Its performance is then
evaluated independently of the surrogate using
$U^{\mathrm{ref}}_{j,t}(\widehat S^{\LOO}(B)\mid c,Z)$.

All comparators satisfy the same budget. Random Feasible samples uniformly
from feasible subsets; Shared LOO averages the four receiver-specific LOO
vectors; receiver-specific Shapley uses Sec.~\ref{sec:estimand} scores; and
full payload uses $Z_{i,t}$ when feasible. OracleSubset is the evaluation-only
exhaustive reference.
Primary comparisons are paired across seeds, instances, and receivers;
$10{,}000$ hierarchical-bootstrap replicates give $95\%$ percentile
confidence intervals (CIs).

%% file: sections/05_results.tex
\section{Results}
\label{sec:results}

\subsection{Semantic Atom Valuation Fidelity}
\label{subsec:valuation-results}

\begin{figure}[t]
\centering
\includegraphics[width=0.94\columnwidth]{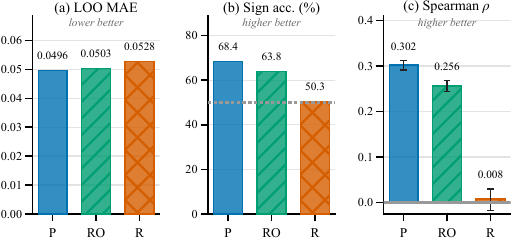}
\caption{LOO fidelity on $8{,}000$ cells; $\delta_0=0.02$; Spearman $\rho$ with $95\%$ CIs.}
\label{fig:valuation}
\vspace{-0.2cm}
\end{figure}

Figure~\ref{fig:valuation} compares estimated and reference LOO effects using
Sec.~\ref{subsec:evaluation-protocol}. In Fig.~\ref{fig:valuation}(a), P, RO,
and R attain MAE $0.0496$, $0.0503$, and $0.0528$. Thus, P differs from the
reference by about $0.05$ utility units per atom on average, while the small
cross-regime differences show that magnitude error alone weakly separates the
logging designs.

Selection-relevant differences are clearer in sign and rank. In
Fig.~\ref{fig:valuation}(b), P recovers $68.4\%$ of non-negligible signs versus
$63.8\%$ for RO and $50.3\%$ for R, gains of $4.6$ and $18.1$ percentage
points. Hence, P distinguishes beneficial from harmful atoms for roughly two
thirds of effects above $\delta_0=0.02$, retaining useful but imperfect
information for~\eqref{eq:score-selection}. Figure~\ref{fig:valuation}(c) gives
$\rho=0.302$ for P with $95\%$ CI $[0.291,0.312]$, versus $0.256$ for RO and
$0.008$ for R, whose interval crosses zero. P therefore preserves a consistent
but moderate ordering, whereas R provides essentially none. Thus, similar MAE can hide selection-relevant sign and rank differences.

This pattern agrees with Sec.~\ref{sec:identification}: P changes the payload
before the receiver acts and targets~\eqref{eq:payload-utility}, whereas RO and
R retain the full-payload decision in retrospective comparisons. P is therefore the strongest regime, while its remaining error motivates the
budgeted test.

\subsection{Budgeted Controlled Utility}
\label{subsec:utility-results}

\begin{figure}[t]
\centering
\includegraphics[width=0.96\columnwidth]{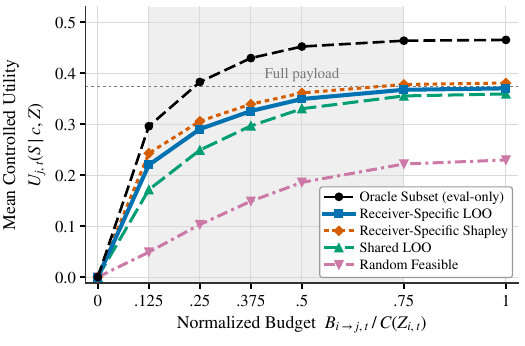}
\caption{Controlled utility vs. $\alpha$; OracleSubset is evaluation-only; Full payload is horizontal.}
\label{fig:budget-utility}
\vspace{-0.2cm}
\end{figure}

Figure~\ref{fig:budget-utility} shows whether the imperfect values in
Sec.~\ref{subsec:valuation-results} yield useful payloads. Utility generally
rises with $\alpha$ as larger budgets admit more useful atoms; OracleSubset is
the exhaustive reference.

Receiver-specific LOO exceeds Shared LOO and Random Feasible across interior
budgets, with the gap most visible when budgets are tight. With few affordable
atoms, utility depends strongly on ranking the most useful ones correctly; as
$\alpha$ grows, ranking errors matter less. The gain over Random Feasible shows
value beyond budget feasibility, while the gain over Shared LOO isolates
receiver-specific ordering because it uses the same atoms, surrogate family,
seeds, budget, and selector but one averaged ranking. 
OracleSubset remains above learned LOO, with the gap consistent with surrogate
error (Fig.~\ref{fig:valuation}) and limitations of scalar-score selection
under nonadditive utility, while the learned gains still confirm useful
downstream ranking.
At $\alpha=1$, full payload yields
$0.3739$ versus $0.4653$ for OracleSubset, about $24\%$ higher. Since all atoms
are affordable, detrimental content can make full transmission suboptimal,
consistent with Proposition~\ref{prop:fullpayload}; this does not imply learned
LOO attains the OracleSubset gain. Receiver-specific Shapley is numerically above LOO across budgets, but no
multiplicity-controlled superiority claim is made.

\subsection{Scope and Outlook}
\label{subsec:limitations}

MRIR evaluates receiver-specific selection in a controlled, single-round
setting with fixed policies. For larger candidate sets, LOO and
permutation-sampled Shapley offer alternatives to exhaustive valuation,
while batching and caching can reduce computational overhead.

\balance
Future work includes heterogeneous packet costs, learned semantic
representations, and channel-aware valuation. Dynamic settings would
require updating the surrogate as policies or context distributions
change. Potential applications include cooperative perception and
robotic coordination.

%% file: sections/06_conclusion.tex
\section{Conclusion}
\label{sec:conclusion}

We studied receiver-specific post-encoding semantic selection under per-link
budgets, identifying controlled-delivery utility by prospective randomization
and estimating signed leave-one-out (LOO) and Shapley values at the transmitter without
receiver-private information.

On Multi-Receiver Intersection Response (MRIR), prospective LOO preserved useful sign and ranking information, and
receiver-specific selection outperformed shared and random alternatives.
Exhaustive subset selection also exceeded full payload at full affordability,
showing that transmitting all available content need not maximize utility.

Useful payload decisions can still arise from imperfect atom values when
selection-relevant signs and rankings are preserved. Future work will
investigate larger receiver populations and learned semantic
representations for cooperative perception and robotic coordination,
permutation-sampled valuation, heterogeneous packet costs, and
channel-aware logging. Adaptive policies and multi-round communication
will require updated utility models beyond the fixed, myopic setting
studied here.

%% file: sections/07_acknowledgment.tex
\section*{Acknowledgment}
This work has been supported by the Federal Ministry of Research,
Technology and Space of the Federal Republic of Germany as part of the
project Open6GHub+ under contract 16KIS2404, and 6G-Sensoria.
The work contributes to the research within the 6G-Valley innovation
cluster.